\providecommand{\algorithmname}{Algorithm}
\numberwithin{equation}{section}
\numberwithin{figure}{section}
  \theoremstyle{plain}
  \newtheorem{lem}{\protect\lemmaname}[section]
   \newenvironment{proof}[1][\proofname]{\par
     \normalfont\topsep6\p@\@plus6\p@\relax
     \trivlist
     \itemindent\parindent
     \item[\hskip\labelsep
           \scshape
       #1]\ignorespaces
   }{%
     \endtrivlist\@endpefalse
   }
   \providecommand{\proofname}{Proof}
 \newlist{casenv}{enumerate}{4}
 \setlist[casenv]{leftmargin=*,align=left,widest={iiii}}
 \setlist[casenv,1]{label={{\itshape\ \casename} \arabic*.},ref=\arabic*}
 \setlist[casenv,2]{label={{\itshape\ \casename} \roman*.},ref=\roman*}
 \setlist[casenv,3]{label={{\itshape\ \casename\ \alph*.}},ref=\alph*}
 \setlist[casenv,4]{label={{\itshape\ \casename} \arabic*.},ref=\arabic*}
  \theoremstyle{plain}
  \newtheorem{cor}{\protect\corollaryname}[section]
  \theoremstyle{remark}
  \newtheorem{claim}{\protect\claimname}[section]
\numberwithin{algorithm}{section}
\algrenewcommand\textproc{\texttt}
 \providecommand{\casename}{Case}
\providecommand{\claimname}{Claim}
\providecommand{\corollaryname}{Corollary}
\providecommand{\lemmaname}{Lemma}
\begin{document}

\title{The Life in 1-Consensus}

\author{Yehuda Afek\\
School of Computer Science\\
Tel-Aviv University\\
\texttt{<afek@post.tau.ac.il>}
\and Eli Daian\\
School of Computer Science\\
Tel-Aviv University\\
\texttt{<eliyahud@post.tau.ac.il>}
\and Eli Gafni\\
Department of Computer Science\\
University of California, Los-Angeles\\
\texttt{<eli@cs.ucla.edu>} } \maketitle
\begin{abstract}
This paper introduces the atomic Write and Read Next ($\text{WRN}_{k}$)
deterministic shared memory object, that for any $k\ge3$, is stronger
than read-write registers, but is unable to implement $2$-processor
consensus. In particular, it refutes the conjecture claiming that
every deterministic object of consensus number $1$ is computationally
equivalent to read-write registers.
\end{abstract}

\section{Introduction}

Shared memory objects have been classified by Herlihy \cite{H91}
by their \emph{consensus number}, the number of processes which can
solve wait-free consensus using any number of copies of an object
and atomic read-write registers.

At the same time, it was also proved that $n$-consensus objects are
\emph{universal} for systems of $n$ processes. However, the computational
power of objects with consensus number $n$ in systems of more than
$n$ processes is not completely understood. Recently, \cite{Afek:2016:DOL:2933057.2933116}
have constructed an infinite sequence of deterministic Set Consensus
objects of consensus number $n\ge2$, with strictly increasing computational
power in systems of more than $n$ processes. However, the case for
$n=1$ remained an open question. And, it has been conjectured that
any deterministic object of consensus number $1$ is computationally
equivalent to read-write registers, meaning that it cannot solve any
problem that is not solvable by read-write registers.

For the nondeterministic case, \cite{Herlihy:1991:IRA:113379.113409}
showed a counter example: a non-deterministic object with consensus
number 1 that cannot be implemented from read-write registers. Deterministic
Set Consensus objects, similar to the ones used in \cite{Afek:2016:DOL:2933057.2933116},
do not provide such a hierarchy for the case of $n=1$, because they
can be used to solve the consensus task for two processes. It is done
by inspecting them as deterministic state machines, and initializing
in a way that it is possible to predict the decided values for subsequent
processors.

In this note, we refute the above conjecture by constructing a deterministic
object, Write and Read Next ($\text{WRN}_{k}$), that solves $\left(k,k-1\right)$-set
consensus, but cannot solve $2$ processors consensus, for every $k\ge3$.
We define the $\text{WRN}_{k}$ object in section \ref{sec:Set-and-Fetch}.
We show that these objects cannot solve $2$-consensus for any $k\ge3$
in section \ref{sec:-is-Weaker}, and we show that they can be used
to implement set consensus in section \ref{sec:-Solves-Set}. The
$k\le2$ case is briefly discussed in section \ref{sec:The--Case}.

\section{The Model}

We follow the standard asynchronous shared memory model, as defined
in \cite{Afek:2016:DOL:2933057.2933116}, in which processes communicate
with one another by applying atomic operations, called steps, to shared
objects. Each object has a set of possible values or states. Each
operation (together with its inputs) is a partial mapping, taking
each state to a set of states. A shared object is \emph{deterministic}
if each operation takes each state to a single state and its associated
response is a function of the state to which the operation is applied.

A \emph{configuration} specifies the state of every process and the
value of every shared object. An \emph{execution} is an alternating
sequence of configurations and steps, starting from an initial configuration.
A faulty process can stop taking steps, but, otherwise, must behave
in accordance with the algorithm it is executing. If $C$ is a configuration
and $s$ is a sequence of steps, we denote by $Cs$ the configuration
(or in the case of nondeterministic objects, the set of possible configurations)
when the sequence of steps $s$ is performed starting from configuration
$C$.

An \emph{implementation} of a sequentially specified object $O$ consists
of a representation of $O$ from a set of shared base objects and
algorithms for each process to apply each operation supported by $O$.
The implementation is \emph{deterministic} if all its algorithms are
deterministic. The implementation is \emph{linearizable} if, in every
execution, there is a sequential ordering of all completed operations
on $O$ and a (possibly empty) subset of the uncompleted operations
on $O$ such that:
\begin{enumerate}
\item If $op$ is completed before $op^{\prime}$ begins, then $op$ occurs
before $op^{\prime}$ in this ordering.
\item The behavior of each operation in the sequence is consistent with
its sequential specification (in terms of its response and its effect
on shared objects).
\end{enumerate}
An implementation of an object $O$ is \emph{wait-free} if, in every
execution, each process that does not crash completes each of its
operations on $O$ in a finite number of its own steps. The implementation
is \emph{non-blocking} if, starting from every configuration in every
infinite execution, some process completes one of its operations within
a finite number of steps. In the rest of this paper, we discuss only
deterministic, linearizable and wait-free implementations.

A \emph{task} specifies what combinations of output values are allowed
to be produced, given the input value of each process and the set
of processes producing output values. A wait-free or non-blocking
solution to a task is an algorithm in which each process that does
not crash produces an output value in a finite number of its own steps
such that the collection of output values satisfies the specification
of the task, given the input values of the process.

In the \emph{consensus task}, each process, $p_{i}$, has an input
value $x_{i}$ and, if it is non-faulty, must output a value $y_{i}$
that satisfies the following two properties:
\begin{description}
\item [{Validity}] Every output is the input of some process.
\item [{Agreement}] All outputs are the same.
\end{description}
We say that an execution of an algorithm solving consensus \emph{decides
a value} if that value is the output of some process. \emph{Binary
consensus} is the restriction of the consensus task in which each
input value $x_{i}\in\left\{ 0,1\right\} $.

The \emph{$k$-set consensus task}, introduced by \cite{Chaudhuri:1990:AHC:93385.93431,C93},
is defined in the same way, except that agreement is replaced by the
following property:
\begin{description}
\item [{$k$-agreement}] There are at most $k$ different output values.
\end{description}
Note that the $1$-set consensus task is the same as the consensus
task.

An object has \emph{consensus number $n$} if there is a wait-free
algorithm that uses only copies of this object and registers to solve
consensus for $n$ processes, but there is no such algorithm for $n+1$
processes. An object has an infinite consensus number if there is
such algorithm for each positive integer $n$.

For all positive integers $k<n$, an $\left(n,k\right)$-set consensus
nondeterministic object \cite{borowsky1993implication} supports one
operation, \texttt{propose}, which takes a single non-negative integer
as input. The value of an $\left(n,k\right)$-set consensus object
is a set of at most $k$ values, which is initially empty, and a count
of the number of \texttt{propose} operations that have been performed
on it (to a maximum of $n$). The first \texttt{propose} operation
adds its input to the set. Any other \texttt{propose} operation can
nondeterministically choose to add its input to the set, provided
the set has size less than $k$. Each of the first $n$ \texttt{propose}
operations performed on the object \emph{nondeterministically} returns
an element from the set as its output. All subsequent \texttt{propose}
operations return $\bot$.

\section{\label{sec:Set-and-Fetch}Write and Read Next Objects}

For every $k\ge2$, we introduce the $\text{WriteAndReadNext}_{k}$
(or $\text{WRN}_{k}$) object, that has a single operation \textendash{}
\texttt{WRN}. This operation accepts an index $i$ in the range $\left\{ 0,\dots,k-1\right\} $,
and a value $v\neq\bot$. It returns the value $v^{\prime}$ that
was passed in the previous invocation to \texttt{WRN} with the index
$\left(i+1\right)\mod k$, or $\bot$ if there is no such previous
invocation.

A possible implementation of $\text{WRN}_{k}$ consists of $k$ registers,
$A\left[0\right],\dots,A\left[k-1\right]$, initially initialized
to $\bot$. A sequential specification of the atomic \texttt{WRN}
operation is presented in algorithm \ref{alg:A-sequential-specification}.

\begin{algorithm}
\begin{algorithmic}[1]
\Function{WRN}{$i, v$}
    \Comment {$i \in \left\{ 0, \dots, k - 1 \right\}$, $v \neq \bot$}
  \State $A \left[ i \right] \gets v$
  \State \Return{$A \left[ \left( i + 1 \right) \mod k \right]$}
\EndFunction
\end{algorithmic}

\caption{\label{alg:A-sequential-specification}A sequential specification
of the atomic \texttt{WRN} operation of a $\text{WRN}_{k}$ object.}
\end{algorithm}

\medskip{}

From now on, we assume $k\ge3$, unless stated otherwise.

\section{\label{sec:-is-Weaker}$\text{WRN}_{k}$ is Weaker than $2$-Consensus}

We follow the standard definitions of \emph{bivalent configuration},
\emph{$v$-univalent configuration} and \emph{critical configuration},
as defined in \cite{FLP,H91}.
\begin{lem}
\label{lem:For-each-,}For each $k\ge3$, there is no wait-free algorithm
for solving the consensus task with $2$ processes using only registers
and $\text{WRN}_{k}$ objects.
\end{lem}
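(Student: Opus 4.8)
The plan is to carry out the classical Herlihy--FLP bivalency argument; the only genuinely new work is a commutativity/indistinguishability analysis of two overlapping \texttt{WRN} invocations on a single $\text{WRN}_{k}$ object. Assume for contradiction that two processes $p$ and $q$ have a wait-free algorithm solving binary consensus using only registers and $\text{WRN}_{k}$ objects. As usual there is a bivalent initial configuration, and since the algorithm is wait-free a standard valence argument yields a \emph{critical} configuration $C$: the (unique) step of $p$ from $C$ leads to an $x$-univalent configuration $Cp$, the step of $q$ leads to a $y$-univalent $Cq$, and $x\neq y$. If $p$ and $q$ were poised on distinct base objects (or one took a purely local step) their steps would commute and $Cpq=Cqp$ would be simultaneously $x$- and $y$-univalent, a contradiction; so they are poised on the same base object $X$.

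If $X$ is a register we argue exactly as in Herlihy's proof: two reads commute, so $Cpq=Cqp$; if $p$ reads and $q$ writes (or vice versa) then $Cpq$ and $Cq$ differ only in the local state of $p$, so running $q$ solo from each decides the same value, contradicting $x\neq y$; if both write, $Cqp$ and $Cp$ differ only in the local state of $q$ and running $p$ solo gives the same contradiction. So suppose $X$ is a $\text{WRN}_{k}$ object, $p$ is about to run $\mathtt{WRN}(i,v)$ and $q$ is about to run $\mathtt{WRN}(j,w)$; recall that the entire state of $X$ is the tuple $(A[0],\dots,A[k-1])$ of last-written slot values.

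I would split on the cyclic relationship between $i$ and $j$. (i) If $i=j$, then after either execution order slot $A[i]$ holds the value of whichever of $p,q$ moved second, every other slot is untouched, and each process receives the unchanged value of $A[i{+}1]$ as its return value; hence $Cqp$ and $Cp$ (or $Cpq$ and $Cq$) differ only in one process's local state, and the solo-run argument applies. (ii) If $j=(i{+}1)\bmod k$, then --- using $k\ge 3$ to get $j{+}1\neq i$ --- the value returned to $q$ (the unchanged $A[j{+}1]$) and the final slot contents ($A[i]=v$ and $A[j]=w$) coincide in $Cpq$ and $Cqp$, and only the value returned to $p$ differs (old $A[j]$ versus $w$); thus $Cpq$ and $Cqp$ differ only in $p$'s local state and running $q$ solo from each decides the same value, a contradiction. (iii) The case $i=(j{+}1)\bmod k$ is symmetric, with $p$ and $q$ interchanged. (iv) Otherwise $i,\,i{+}1,\,j,\,j{+}1$ are four distinct indices, the two operations touch disjoint slots, the steps fully commute, and $Cpq=Cqp$ is impossible. (When $k=3$ case (iv) is vacuous, since any two distinct residues modulo $3$ are cyclically adjacent, but (i)--(iii) already exhaust all possibilities.)

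The main obstacle is the adjacent-slot case (ii)--(iii): one must verify carefully that the ``read-your-neighbor'' interference between the two overlapping \texttt{WRN} operations perturbs the local state of \emph{at most one} of the two processes while leaving all shared object states and the other process's local state identical under both orders --- this is exactly what makes the solo-run indistinguishability go through. The hypothesis $k\ge 3$ is used precisely here, to guarantee that the ``bystander'' slot read by each operation ($A[i{+}1]$ for $p$ and $A[j{+}1]$ for $q$) is not the slot being overwritten by the other operation.
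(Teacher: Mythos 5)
Your proposal is correct and follows essentially the same route as the paper: a critical-configuration bivalency argument reducing to two overlapping \texttt{WRN} invocations on one $\text{WRN}_{k}$ object, with the hypothesis $k\ge 3$ used exactly as in the paper to guarantee that the slot read by the process run solo is not overwritten by the other process's step. The only difference is presentational: the paper collapses your subcases (ii)--(iv) into a single ``without loss of generality $i_{Q}\neq i_{P}+1 \bmod k$'' step, whereas you enumerate the cyclic-adjacency subcases explicitly (and also spell out the register and distinct-object cases that the paper delegates to Herlihy).
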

\begin{proof}
Assume such an algorithm exists. Consider the possible executions
of the processes $P$ and $Q$ of this algorithm, while proposing
$0$ and $1$, respectively. Let $C$ be a critical configuration
of this run. Denote the next steps of $P$ and $Q$ from $C$ as $s_{P}$
and $s_{Q}$, respectively. Without loss of generality, we assume
that $Cs_{P}$ is a $0$-univalent configuration, and $Cs_{Q}$ is
a $1$-univalent configuration.

Following \cite{H91}, $s_{P}$ and $s_{P}$ both invoke a \texttt{WRN}
operation on the same $\text{WRN}_{k}$.
\begin{casenv}
\item Both $s_{P}$ and $s_{Q}$ perform \texttt{WRN} with the same index
$i$.

The configurations $Cs_{P}$ and $Cs_{Q}s_{P}$ are indistinguishable
for a solo run of $P$, but a solo run of $P$ from $Cs_{P}$ decides
$0$, while an identical solo run of $P$ from $Cs_{Q}s_{P}$ decides
$1$. This is a contradiction.
\item $s_{P}$ and $s_{Q}$ perform \texttt{WRN} with different indices,
$i_{P}$ and $i_{Q}$, respectively.

Since $k\ge3$, either $i_{P}\neq i_{Q}+1\mod k$ or $i_{Q}\neq i_{P}+1\mod k$.
Without loss of generality, assume that $i_{Q}\neq i_{P}+1\mod k$.
So the configurations $Cs_{P}s_{Q}$ and $Cs_{Q}s_{P}$ are indistinguishable
for a solo run of $P$. However, the identical solo runs of $P$ from
the configurations $Cs_{P}s_{Q}$ and $Cs_{Q}s_{P}$ decide $0$ and
$1$, respectively, which is a contradiction.

\end{casenv}
Both cases resulted in a contradiction, and therefore no such algorithm
exists.
\end{proof}
\begin{cor}
The consensus number of $\text{WRN}_{k}$ is $1$, for every $k\ge3$.
\end{cor}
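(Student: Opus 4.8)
The plan is to read this off directly from the definition of consensus number together with Lemma \ref{lem:For-each-,}. Recall that an object has consensus number $n$ exactly when there is a wait-free algorithm using only copies of the object and registers that solves consensus for $n$ processes, but no such algorithm for $n+1$ processes. So to establish that $\text{WRN}_k$ has consensus number $1$ I need two things: a wait-free $1$-process consensus algorithm, and the nonexistence of a wait-free $2$-process consensus algorithm.

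First I would dispatch the positive direction, which is immediate: with a single process, that process simply outputs its own input value. This trivially satisfies validity (the output is that process's input) and agreement (there is only one output), uses no shared objects at all, and terminates in zero steps, hence is wait-free. Thus $\text{WRN}_k$ (like every object) solves $1$-process consensus.

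Next I would invoke Lemma \ref{lem:For-each-,} for the negative direction: for every $k\ge3$ there is no wait-free algorithm solving $2$-process consensus using only registers and $\text{WRN}_k$ objects. Combining the two directions, $n=1$ is the largest integer for which wait-free consensus among $n$ processes is solvable with registers and $\text{WRN}_k$, so by definition the consensus number of $\text{WRN}_k$ is $1$ for every $k\ge3$, completing the proof.

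There is essentially no obstacle here — all of the real work is already contained in Lemma \ref{lem:For-each-,} and its critical-configuration case analysis. The only point worth stating explicitly is the (trivial) base case and the observation that the definition of consensus number requires solvability at $n$ and unsolvability at $n+1$, both of which are now in hand.
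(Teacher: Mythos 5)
Your proposal is correct and matches the paper's intent exactly: the paper states this as an immediate corollary of Lemma \ref{lem:For-each-,}, with the trivial solvability of $1$-process consensus left implicit, which you simply spell out. Nothing further is needed.
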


\section{\label{sec:-Solves-Set}$\text{WRN}_{k}$ Solves $\left(k,k-1\right)$-Set
Consensus}

\subsection{Solution in a System of $k$ Processes}

For any $k\ge3$, a $\text{WRN}_{k}$ object can solve the $\left(k,k-1\right)$-set
consensus task for $k$ processes with unique ids taken from $\left\{ 0,...,k-1\right\} $,
using the following algorithm (also described in algorithm \ref{alg:-Set-consensus-using}):
Assume the processes are $P_{0},\dots,P_{k-1}$, and their values
are $v_{0},\dots,v_{k-1}$. Process $P_{i}$ invokes a \texttt{WRN}
with index $i$ and value $v_{i}$. If the output of the operation,
$t$, is $\bot$, $P_{i}$ decides $v_{i}$. Otherwise, it decides
$t$.

\begin{algorithm}
\begin{algorithmic}[1]
\Function{Propose}{$v_i$}
    \Comment {For process $P_i$, $0 \le i < k$}
  \State $t \gets \Call{WRN}{i, v_i}$
    \Comment {$t$ is a local variable.}
  \If{$t \ne \bot$}
    \Return{$t$}
  \Else{}
    \Return{$v_i$}
  \EndIf
\EndFunction
\end{algorithmic}

\caption{\label{alg:-Set-consensus-using}$\left(k-1\right)$-Set consensus
using a $\text{WRN}_{k}$ object.}
\end{algorithm}
\begin{claim}
Algorithm \ref{alg:-Set-consensus-using} is wait free.
\end{claim}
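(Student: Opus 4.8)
The plan is to observe that Algorithm~\ref{alg:-Set-consensus-using} is straight-line code: the body of \textproc{Propose} for process $P_i$ consists of exactly one invocation of the shared \texttt{WRN} operation (with index $i$ and value $v_i$), followed by a single test of the local variable $t$ against $\bot$ and a \texttt{return} of either $t$ or $v_i$. There are no loops, no retry steps, and no busy-waiting, so there is nothing whose termination depends on the scheduler or on the other processes.

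First I would invoke the standing convention of the paper (Section~2) that we consider only wait-free implementations built from base objects, and that $\text{WRN}_k$ is treated as an atomic primitive whose sequential specification is Algorithm~\ref{alg:A-sequential-specification}; hence the single \texttt{WRN} step taken by $P_i$ completes in a finite number of $P_i$'s own steps, regardless of the steps taken by the other $k-1$ processes. Next I would note that the conditional test and the \texttt{return} are purely local computation, so they contribute only a constant number of additional steps. Combining these, every process that does not crash finishes \textproc{Propose} within a constant number of its own steps — in fact the algorithm is bounded wait-free, with the bound independent of $k$ and of the interleaving — which is exactly the definition of wait-freedom.

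The honest assessment is that there is no real obstacle: the only thing to be careful about is stating clearly that $\text{WRN}_k$ is here used as a base object (so each \texttt{WRN} call terminates by assumption), rather than as the object under construction, so that the one shared step in the algorithm is accounted for. Everything else is immediate from the absence of loops.
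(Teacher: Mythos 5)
Your argument is correct and is exactly the (trivial) reason the paper states this claim without proof: each \texttt{Propose} consists of a single atomic \texttt{WRN} step on the base object plus a constant amount of local computation, so every non-crashed process terminates in a bounded number of its own steps independent of the scheduler. Nothing further is needed.
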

\begin{claim}
\label{claim:The-first-process}The first process to perform \texttt{WRN}
decides its own proposed value.
\end{claim}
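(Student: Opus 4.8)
The plan is to argue directly from the semantics of the single \texttt{WRN} operation, as given by the sequential specification in Algorithm~\ref{alg:A-sequential-specification}. Let $P_i$ be the first process to perform \texttt{WRN} on the object. By assumption, at the moment $P_i$ applies its operation, no \texttt{propose} (hence no \texttt{WRN}) has yet been applied by any other process. In particular, there has been no previous invocation of \texttt{WRN} with index $(i+1)\bmod k$, since there has been no previous invocation of \texttt{WRN} at all.

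First I would invoke the specification of the return value: \texttt{WRN}$(i, v_i)$ returns the value passed to the most recent earlier invocation with index $(i+1)\bmod k$, or $\bot$ if no such invocation exists. Since no earlier invocation exists, the operation returns $\bot$. Thus the local variable $t$ that $P_i$ obtains in line~2 of Algorithm~\ref{alg:-Set-consensus-using} equals $\bot$, so the conditional test $t \neq \bot$ fails and $P_i$ executes the \texttt{else} branch, returning $v_i$, its own proposed value.

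The only subtlety is to confirm that ``the first process to perform \texttt{WRN}'' is well defined and that linearizability does not undermine the argument: since the implementation is linearizable, the completed operations on the $\text{WRN}_k$ object are totally ordered consistently with real time, and ``first'' refers to the first operation in that linearization order. That first operation, by definition, sees the initial state of the object (all registers $\bot$ in the reference implementation), so its response is $\bot$ regardless of the index it uses. I do not expect any real obstacle here; the claim is essentially an unwinding of the definition, and the entire argument is a one-paragraph observation once the linearization point is pinned down.
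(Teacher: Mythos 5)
Your argument is correct and is essentially the paper's own proof: the first invocation sees no prior \texttt{WRN} with index $(i+1)\bmod k$, so it receives $\bot$ and the process returns its own proposal. The extra remark about linearizability pinning down ``first'' is a fine clarification but not a different approach.
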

\begin{proof}
Since it is the first one to invoke \texttt{WRN}, the output of \texttt{WRN}
is $\bot$, and hence the process decides on its own proposed value.
\end{proof}
\begin{claim}
\label{claim:Let--be}Let $P_{i}$ be the last process to perform
\texttt{WRN}. So $P_{i}$ decides the proposal of $P_{\left(i+1\right)\mod k}$.
\end{claim}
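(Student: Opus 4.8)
The plan is to read the claim off directly from the sequential specification of \texttt{WRN} (Algorithm~\ref{alg:A-sequential-specification}) and the structure of Algorithm~\ref{alg:-Set-consensus-using}, using only that the $\text{WRN}_k$ object is atomic. Write $j=\left(i+1\right)\mod k$. Since $P_i$ is the last of the participating processes to perform \texttt{WRN}, every other process, and in particular $P_j$, has already completed its own \texttt{WRN} step before $P_i$ takes its step.

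First I would record two structural facts about Algorithm~\ref{alg:-Set-consensus-using}: each process $P_\ell$ performs \texttt{WRN} exactly once, with index $\ell$, and the ids $0,\dots,k-1$ are pairwise distinct. Hence, over the whole execution there is exactly one \texttt{WRN} invocation carrying the index $j$ --- the one made by $P_j$ --- and it passes the value $v_j\ne\bot$.

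Next I would use atomicity: by linearizability the completed \texttt{WRN} operations admit a sequential order consistent with real time, and since $P_j$'s operation completes before $P_i$'s begins, $P_j$'s \texttt{WRN} precedes $P_i$'s in that order. Combining this with the previous paragraph, $P_j$'s invocation is the unique \texttt{WRN} with index $j$ that precedes $P_i$'s \texttt{WRN}, so it is precisely the ``previous invocation to \texttt{WRN} with index $\left(i+1\right)\mod k$'' appearing in the specification. Therefore $P_i$'s \texttt{WRN} with index $i$ returns $v_j\ne\bot$, and inspecting Algorithm~\ref{alg:-Set-consensus-using} then gives $t=v_j\ne\bot$, so $P_i$ decides $t=v_j$, the proposal of $P_{\left(i+1\right)\mod k}$.

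I do not anticipate any real obstacle; the statement is essentially an unwinding of the specification. The only step that merits an explicit sentence is the passage from ``$P_i$ takes the last \texttt{WRN} step'' to ``$P_j$'s \texttt{WRN} is ordered before $P_i$'s in the linearization'', which is immediate from property~1 of a linearizable implementation, since $P_j$'s operation completes before $P_i$'s begins.
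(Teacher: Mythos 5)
Your proof is correct and takes essentially the same route as the paper: since $P_i$ is last, $P_{(i+1)\bmod k}$ has already completed its \texttt{WRN}, so $P_i$ receives $v_{(i+1)\bmod k}$ and decides it. You merely spell out the atomicity/linearization step and the uniqueness of the index, which the paper leaves implicit.
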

\begin{proof}
Since $P_{i}$ is the last one to invoke \texttt{WRN}, $P_{\left(i+1\right)\mod k}$
has already completed its \texttt{WRN} invocation. Theretofore, $P_{i}$
receives $v_{\left(i+1\right)\mod k}$ as the output from \texttt{WRN}.
Hence, $P_{i}$ decides the value of $P_{\left(i+1\right)\mod k}$.
\end{proof}
\begin{claim}[Validity]
A process $P_{i}$ can decide its proposed value, or the proposed
value of $P_{\left(i+1\right)\mod k}$.
\end{claim}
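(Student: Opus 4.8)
The plan is to show that the only value $P_i$ can return from \textsc{Propose} is either $v_i$ or $v_{(i+1)\bmod k}$, by unwinding the code of algorithm~\ref{alg:-Set-consensus-using} together with the sequential specification of \texttt{WRN} in algorithm~\ref{alg:A-sequential-specification}. In the code, $P_i$ calls $\texttt{WRN}(i, v_i)$ and names the result $t$; it then decides $v_i$ when $t = \bot$ and decides $t$ otherwise. So the whole claim reduces to: the value $t$ returned to $P_i$ is always either $\bot$ or $v_{(i+1)\bmod k}$.

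First I would invoke linearizability: in any execution there is a linearization of the completed \texttt{WRN} operations consistent with the sequential specification, so it suffices to reason about a sequential sequence of \texttt{WRN} calls. In that sequential picture, the \texttt{WRN} object behaves exactly like the array $A[0],\dots,A[k-1]$ initialized to $\bot$: a call $\texttt{WRN}(j,v)$ first sets $A[j] \gets v$ and then returns $A[(j+1)\bmod k]$. Since every process $P_m$ only ever calls \texttt{WRN} with index $m$ and value $v_m$ (unique ids in $\{0,\dots,k-1\}$, so no two processes share an index), the cell $A[j]$ is only ever written with the value $v_j$ — it is either still $\bot$ or equal to $v_j$. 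Therefore when $P_i$'s call $\texttt{WRN}(i,v_i)$ reads $A[(i+1)\bmod k]$, it reads a cell whose only possible contents are $\bot$ or $v_{(i+1)\bmod k}$. Feeding this back into the code, $P_i$ decides $v_i$ in the former case and $v_{(i+1)\bmod k}$ in the latter, which is exactly the assertion.

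The only slightly delicate point — and the part I would state carefully rather than the main difficulty — is the passage from the concurrent execution to the sequential one via linearizability, and the observation that $A[(i+1)\bmod k]$ is never overwritten with anything other than $v_{(i+1)\bmod k}$; this uses the hypothesis that ids are unique and drawn from $\{0,\dots,k-1\}$, so exactly one process ``owns'' each index. Claims~\ref{claim:The-first-process} and~\ref{claim:Let--be} are in fact the two boundary instances of this argument (the first caller sees $\bot$; the last caller sees a written cell), so I would phrase the proof as a mild generalization of them. No contradiction or fixpoint machinery is needed here — it is a direct structural argument about what a sequence of \texttt{WRN} calls can produce.
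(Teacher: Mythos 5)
Your argument is correct and is exactly the direct reasoning the paper intends (it leaves this claim unproved as immediate): since each index is owned by a unique process, $A\left[\left(i+1\right)\mod k\right]$ can only ever contain $\bot$ or $v_{\left(i+1\right)\mod k}$, so the code returns $v_{i}$ or $v_{\left(i+1\right)\mod k}$. Nothing further is needed.
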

\begin{claim}
\label{claim:A-process-}A process $P_{i}$ decides its own proposed
value if $P_{\left(i+1\right)\mod k}$ have not invoked \texttt{WRN}
yet.
\end{claim}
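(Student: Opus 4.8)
The plan is to derive the claim directly from the sequential specification of $\text{WRN}_k$ (algorithm \ref{alg:A-sequential-specification}) together with the linearizability of the object. The crucial structural observation is that in algorithm \ref{alg:-Set-consensus-using} the only process that ever invokes \texttt{WRN} with index $j$ is $P_j$ itself; in particular, $P_{(i+1)\bmod k}$ is the unique process whose \texttt{WRN} invocation could ever write a value other than $\bot$ into the cell $A\bigl[(i+1)\bmod k\bigr]$, since that cell is touched only by \texttt{WRN} calls carrying index $(i+1)\bmod k$.

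First I would fix the linearization order guaranteed by the linearizable implementation of the $\text{WRN}_k$ object used in this execution. By hypothesis, $P_{(i+1)\bmod k}$ has not invoked \texttt{WRN} by the time $P_i$'s own \texttt{WRN} operation completes, so the first linearizability condition places $P_i$'s \texttt{WRN} before every \texttt{WRN} invocation of $P_{(i+1)\bmod k}$ in the linearization, and, by the observation above, there is no other operation that affects index $(i+1)\bmod k$. Hence, in the state immediately preceding $P_i$'s \texttt{WRN} in the linearization, we have $A\bigl[(i+1)\bmod k\bigr]=\bot$.

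Next I would invoke the sequential specification: applying \texttt{WRN}$(i,v_i)$ in a state where $A\bigl[(i+1)\bmod k\bigr]=\bot$ returns $\bot$. Therefore the local variable $t$ in algorithm \ref{alg:-Set-consensus-using} equals $\bot$ for $P_i$, the conditional takes the \textbf{else} branch, and $P_i$ returns $v_i$, its own proposed value, as required.

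I do not anticipate a genuine obstacle; the only points needing care are the bookkeeping argument that no operation other than $P_{(i+1)\bmod k}$'s can alter $A\bigl[(i+1)\bmod k\bigr]$ (which is immediate from the id-to-index assignment in algorithm \ref{alg:-Set-consensus-using}), and the precise reading of ``have not invoked \texttt{WRN} yet'' so that linearizability indeed forces $P_i$'s operation to be ordered first. Under the intended reading, in which $P_i$'s \texttt{WRN} returns before $P_{(i+1)\bmod k}$'s begins, this ordering is exactly the first linearizability condition, and the rest is a one-line appeal to the sequential specification.
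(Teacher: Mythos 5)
Your argument is correct, and it is essentially the argument the paper relies on: the paper states this claim without an explicit proof, treating it as immediate in the same one-line style as its proofs of the neighboring claims, namely that in algorithm \ref{alg:-Set-consensus-using} only $P_{\left(i+1\right)\bmod k}$ ever writes the cell $A\left[\left(i+1\right)\bmod k\right]$, so if that process has not yet invoked \texttt{WRN} when $P_i$ does, $P_i$ receives $\bot$ and returns $v_i$. The only remark is that your linearizability scaffolding is heavier than needed, since in the paper's model the \texttt{WRN} operation of a $\text{WRN}_{k}$ object is a single atomic step, so the hypothesis directly means $P_i$'s step precedes any step of $P_{\left(i+1\right)\bmod k}$ and the conclusion follows from the sequential specification in algorithm \ref{alg:A-sequential-specification}; still, spelling this out via the linearizability conditions is harmless and would be the right level of detail if the object were merely linearizable rather than atomic.
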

\begin{cor}[$\left(k-1\right)$-agreement]
Assume the proposals are pairwise different (there are exactly $k$
different proposals). So at most $k-1$ values can be decided.
\end{cor}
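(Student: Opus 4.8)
The plan is to show that when all $k$ proposals are pairwise distinct, the set of decided values misses at least one proposal, hence has size at most $k-1$. The key observation is that decisions come in only two flavors (by the preceding Validity claim): process $P_i$ either decides $v_i$ or decides $v_{(i+1)\bmod k}$. So I want to identify one index $j$ such that $v_j$ is decided by nobody. Since the proposals are pairwise different, $v_j$ can only be decided by $P_j$ itself or by $P_{(j-1)\bmod k}$; ruling out both of these for a well-chosen $j$ suffices.

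First I would single out the last process to invoke \texttt{WRN}; call it $P_\ell$. By Claim \ref{claim:Let--be}, $P_\ell$ decides $v_{(\ell+1)\bmod k}$, not $v_\ell$. Now set $j = \ell$ and ask who could possibly decide $v_\ell$. By the Validity claim the only candidates are $P_\ell$ (excluded, as just noted) and $P_{(\ell-1)\bmod k}$, which decides $v_\ell$ only if it received $v_\ell$ as the output of its \texttt{WRN} call — i.e.\ only if $P_\ell$ had already performed its \texttt{WRN} before $P_{(\ell-1)\bmod k}$ did. But $P_\ell$ is the \emph{last} process to perform \texttt{WRN}, so $P_\ell$'s call comes after that of $P_{(\ell-1)\bmod k}$ (note $\ell \neq (\ell-1)\bmod k$ since $k \ge 3$), and therefore $P_{(\ell-1)\bmod k}$ reads either $\bot$ or some value written earlier, which — because all proposals are distinct — is not $v_\ell$. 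Hence $P_{(\ell-1)\bmod k}$ decides its own value $v_{(\ell-1)\bmod k} \neq v_\ell$. So no process decides $v_\ell$.

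Since $v_\ell$ is never decided and the $k$ proposals are pairwise distinct, the set of decided values is a subset of $\{v_0,\dots,v_{k-1}\}\setminus\{v_\ell\}$, which has at most $k-1$ elements. This establishes $(k-1)$-agreement.

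I expect the only delicate point to be the bookkeeping about \texttt{WRN} ordering: one must invoke the sequential specification / linearization so that ``$P_\ell$ performs \texttt{WRN} last'' genuinely implies that $P_{(\ell-1)\bmod k}$'s read does not see $v_\ell$. This is immediate from algorithm \ref{alg:A-sequential-specification} once we fix a linearization order, but it is worth stating explicitly; everything else is a short case analysis riding on the already-proved Validity claim and Claim \ref{claim:Let--be}.
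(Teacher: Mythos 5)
Your proof is correct and follows essentially the same route as the paper: single out the last process to invoke \texttt{WRN}, use Claim \ref{claim:Let--be} to see it does not decide its own proposal, and use the observation underlying Claim \ref{claim:A-process-} to see that $P_{(\ell-1)\bmod k}$ cannot decide it either, so the last process's proposal is never decided. Your write-up is in fact somewhat more detailed than the paper's, which simply cites Claims \ref{claim:The-first-process}, \ref{claim:Let--be} and \ref{claim:A-process-}.
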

\begin{proof}
Let $P_{i}$ be the first process to invoke \texttt{WRN}, and $P_{j}$
be the last process to invoke \texttt{WRN}. From claim \ref{claim:The-first-process},
$P_{i}$ decides its proposal. From claim \ref{claim:Let--be}, $P_{j}$
decides the proposal of $P_{\left(j+1\right)\mod k}$. From claim
\ref{claim:A-process-}, no process decides the proposal of $P_{j}$.
\end{proof}
\begin{cor}
Algorithm \ref{alg:-Set-consensus-using} solves the $\left(k-1\right)$-set
consensus task for $k$ processes.
\end{cor}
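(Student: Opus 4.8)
The plan is to verify that Algorithm~\ref{alg:-Set-consensus-using} satisfies the three requirements of the $\left(k-1\right)$-set consensus task for $k$ processes --- termination, validity, and $\left(k-1\right)$-agreement --- by assembling the claims already established. Termination is exactly the wait-freedom claim above, so nothing further is required there. Validity is immediate from the Validity claim: in every execution each deciding process $P_{i}$ outputs either $v_{i}$ or $v_{\left(i+1\right)\bmod k}$, and both are proposals of some process, hence every output is a proposed value.

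The remaining obligation, $\left(k-1\right)$-agreement, I would obtain by a short case analysis on a fixed but arbitrary execution. Let $D$ be the set of distinct values output in that execution. If at most $k-1$ of the $k$ processes produce an output --- in particular whenever at least one process is faulty --- then $\left|D\right|\le k-1$ trivially, since each process contributes at most one value to $D$. So assume all $k$ processes decide. If the proposals $v_{0},\dots,v_{k-1}$ are not pairwise distinct, then the set $\left\{v_{0},\dots,v_{k-1}\right\}$ has at most $k-1$ elements; by Validity $D$ is contained in it, so $\left|D\right|\le k-1$. Otherwise the proposals are pairwise distinct, and the preceding $\left(k-1\right)$-agreement corollary applies directly and gives $\left|D\right|\le k-1$.

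I do not expect a real obstacle: every case collapses to a single line. The only point that needs attention is that the preceding $\left(k-1\right)$-agreement corollary, together with Claims~\ref{claim:The-first-process}, \ref{claim:Let--be} and \ref{claim:A-process-} on which it rests --- for instance the step inside Claim~\ref{claim:Let--be} asserting that the successor of the last process to act has already completed its \texttt{WRN} invocation --- tacitly presumes that all $k$ processes participate. Consequently the substance of the proof is precisely to peel off the ``fewer than $k$ outputs'' case and the ``proposals not pairwise distinct'' case, disposing of each directly, before invoking that corollary in the one remaining situation.
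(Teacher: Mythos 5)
Your proof is correct and follows essentially the same route the paper intends: the corollary is just the assembly of the wait-freedom claim, the Validity claim, and the preceding $\left(k-1\right)$-agreement corollary. Your explicit handling of the degenerate cases (fewer than $k$ deciders, proposals not pairwise distinct) fills in details the paper leaves implicit, since its agreement corollary is stated only for pairwise-distinct proposals.
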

\begin{cor}
$\text{WRN}_{k}$ solves $\left(n',h\right)$-set consensus task for
any $n'/h \leq 3/2$ in a system with $n'$ processes.
\end{cor}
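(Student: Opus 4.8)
The statement to prove is the final corollary: $\text{WRN}_k$ solves $(n',h)$-set consensus for any $n'/h \le 3/2$ in a system with $n'$ processes. The key observation is that the previous corollary already gives us a $(k, k-1)$-set consensus solution for $k$ processes with ids in $\{0,\dots,k-1\}$, and the ratio there is $k/(k-1)$, which tends to $1$ but equals $3/2$ exactly when $k=3$. So the plan is: given a target ratio $n'/h \le 3/2$, reduce the $n'$-process $(n',h)$-set consensus task to a collection of independent $3$-process $(3,2)$-set consensus instances solved by $\text{WRN}_3$ objects (and registers, but here we only need the objects).

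**Key steps, in order.** First I would partition the $n'$ processes into groups of size at most $3$. Since $n'/h \le 3/2$, i.e. $h \ge 2n'/3$, we can choose a partition into $g = \lceil n'/3 \rceil$ groups such that the number of groups is at most $h$; more carefully, we want a partition into groups each of size $\le 3$ using at least $\lceil n'/3\rceil$ groups, and we must check $\lceil n'/3 \rceil \le h$, which follows from $h \ge 2n'/3 \ge \lceil n'/3 \rceil$ for $n' \ge 1$ (a small arithmetic check, splitting on $n' \bmod 3$). Second, assign one fresh $\text{WRN}_3$ object to each group, and within a group have the (at most $3$) processes run algorithm~\ref{alg:-Set-consensus-using} using local ids $0,1,2$; a process simply outputs whatever that algorithm returns. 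Third, invoke the correctness results already proved: each group of size $\ge 1$ and $\le 3$ produces at most $2$ distinct outputs when it has $3$ participants, at most $1$ when it has fewer (by Claim~\ref{claim:The-first-process} in the degenerate cases, or directly), and every output is a proposal of some process in that group (Validity claim). Fourth, aggregate: the total number of distinct outputs across all groups is at most $2$ per group of size $3$ and at most the group size otherwise, hence at most $h$ after the arithmetic is arranged so that the bound $\sum_{\text{groups}} \min(|G|, 2) \le h$ holds; and validity is inherited since each group's outputs are among that group's inputs, so globally every output is some process's input. Wait-freedom is immediate from the wait-freedom of algorithm~\ref{alg:-Set-consensus-using} in each group.

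**Main obstacle.** The genuinely fiddly part is the counting: I need a partition of $n'$ processes into groups of size $\le 3$ for which $\sum_G \min(|G|,2)$ is at most $h$ for every $h \ge \lceil 2n'/3 \rceil$ (equivalently every $h$ with $n'/h \le 3/2$). The natural choice is to use as many size-$3$ groups as possible: write $n' = 3q + r$ with $r \in \{0,1,2\}$. Then $q$ groups of size $3$ contribute $2q$ outputs; the remainder $r$ processes form one extra group contributing $\le r \le 2$. So the worst-case output count is $2q + [r \ge 1]$, and one checks $2q + [r\ge 1] \le \lceil 2n'/3 \rceil$ in each of the three residue cases, hence $\le h$. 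The only subtlety is the edge behavior for small $n'$ (e.g. $n' \le 3$, where a single group suffices), and making sure "fresh" objects are used so the groups do not interfere — each $\text{WRN}_3$ is touched only by its own $\le 3$ processes, so linearizability of the composite follows trivially from linearizability of each object. I would present the residue-case check compactly and note that the $n' \le 2$ cases are covered since $(n', h)$-set consensus with $h \ge \lceil 2n'/3\rceil \ge 1$ and $h \ge 1$ is either trivial or solvable by a single group.
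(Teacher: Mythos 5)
The paper states this corollary without any proof, so there is nothing to compare line by line; your partition construction is the natural way to derive it, and for $\text{WRN}_3$ it is essentially sound: the groups use disjoint objects, so they do not interfere, validity and wait-freedom are inherited from algorithm~\ref{alg:-Set-consensus-using}, and the counting goes through. One local slip: your claim that a group with fewer than $3$ participants produces ``at most $1$'' output is false. A group of two processes with local ids $0$ and $1$ on a $\text{WRN}_3$ object can produce two distinct outputs, because the process with id $1$ reads index $2$, which nobody writes, and hence always decides its own proposal; so with $n'=3q+r$ the worst case is $2q+r$ decided values, not $2q+[r\ge1]$. Fortunately $2q+r\le 2q+\left\lceil 2r/3\right\rceil=\left\lceil 2n'/3\right\rceil\le h$ for $r\in\{0,1,2\}$, so the conclusion survives; just state the bound as $\sum_{G}\min(|G|,2)$ from the start and drop the ``at most $1$'' remark.

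The substantive gap is the mismatch between the object in the statement and the object in your proof. The corollary speaks of $\text{WRN}_{k}$ for the fixed $k\ge3$ of the paper, whereas your construction uses $\text{WRN}_{3}$ objects; as written it proves the corollary only for $k=3$ (equivalently, that \emph{some} member of the WRN family achieves ratio $3/2$). For $k\ge4$ the analogous partition into groups of size $k$ only yields $h\ge(k-1)\left\lfloor n'/k\right\rfloor+(n'\bmod k)$, i.e.\ ratio $n'/h\le k/(k-1)<3/2$, and to push this to $3/2$ you would need $\text{WRN}_{k}$ to solve $(3,2)$-set consensus among three processes. Nothing in the paper or in your proposal provides that: a single \texttt{WRN} operation atomically couples only consecutive indices, so three processes cannot close a cycle on a ring of size $k\ge4$ with one operation each, and the obvious multi-operation emulations admit executions with three distinct outputs. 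So you should either read (and prove) the corollary as the $k=3$ instance, as you did, or flag that for $k\ge4$ the statement needs to be weakened to $n'/h\le k/(k-1)$ or supported by an argument not present in the paper.
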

\begin{cor}
$\text{WRN}_{k}$ cannot be implemented from atomic read-write registers.
Hence, $\text{WRN}_{k}$ is stronger than registers.
\end{cor}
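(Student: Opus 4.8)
The plan is to derive a contradiction with the known impossibility of wait-free set consensus from read-write registers, rather than with the consensus number (which, by the corollary in Section~\ref{sec:-is-Weaker}, is $1$ for $\text{WRN}_{k}$, the same as for registers, and hence cannot separate the two). Suppose, for contradiction, that there is a wait-free linearizable implementation $\mathcal{I}$ of a $\text{WRN}_{k}$ object using only atomic read-write registers.

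First I would build a new algorithm for $\left(k-1\right)$-set consensus among $k$ processes $P_{0},\dots,P_{k-1}$ with unique ids in $\left\{0,\dots,k-1\right\}$ by taking Algorithm~\ref{alg:-Set-consensus-using} and replacing its single $\text{WRN}_{k}$ object with $\mathcal{I}$: wherever $P_{i}$ would invoke \texttt{WRN}$\left(i,v_{i}\right)$, it instead runs the corresponding algorithm of $\mathcal{I}$ on the underlying registers. Since $\mathcal{I}$ is wait-free, each such invocation terminates in finitely many of the calling process's steps, so the composed algorithm is wait-free. Since $\mathcal{I}$ is linearizable, every execution of the composed algorithm has the same set of process outputs as some execution of Algorithm~\ref{alg:-Set-consensus-using} run against a correct atomic $\text{WRN}_{k}$ object; by the preceding corollary that latter algorithm solves $\left(k-1\right)$-set consensus for $k$ processes, so the composed algorithm does too --- using only read-write registers.

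This contradicts the impossibility of wait-free $\left(n,m\right)$-set consensus from read-write registers whenever $n>m$ \cite{borowsky1993implication,Chaudhuri:1990:AHC:93385.93431}: here $n=k$ and $m=k-1<k$, and that lower bound already applies to a system of exactly $k$ processes with fixed, known ids, which matches the setting of Algorithm~\ref{alg:-Set-consensus-using} exactly. Hence no register implementation of $\text{WRN}_{k}$ can exist. Since a $\text{WRN}_{k}$ object trivially implements itself while registers provably cannot, $\text{WRN}_{k}$ is strictly stronger than read-write registers.

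The only delicate point --- and it is a mild one --- is the composition step: one must invoke the standard locality/composability property of linearizability, namely that substituting a wait-free linearizable implementation for a base object inside a wait-free algorithm again yields a wait-free algorithm whose executions are, up to linearization of the implemented object, executions of the original. No new argument is required for this; the only care needed is to line up the process count and id range of Algorithm~\ref{alg:-Set-consensus-using} with the hypotheses of the set-consensus lower bound, which is immediate.
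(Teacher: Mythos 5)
Your proposal is correct and follows essentially the same route the paper intends: the corollary is meant to follow from the fact that Algorithm~\ref{alg:-Set-consensus-using} solves $\left(k-1\right)$-set consensus for $k$ processes together with the known impossibility of wait-free $\left(k,k-1\right)$-set consensus from read-write registers, which is exactly your composition argument (the paper simply leaves this justification implicit). Your added care about wait-free linearizable composability and the observation that consensus number alone cannot separate $\text{WRN}_{k}$ from registers are both accurate and consistent with the paper.
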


\subsection{Solution in a System with $k$ Participating Processes Out of Many}

Assuming that each process has a unique name in $\left\{ 0,\dots,k-1\right\} $
might be a strong limitation in some models. In this section, we assume
we have at most $k$ participating processes, whose names are taken
from $\left\{ 0,\dots,M-1\right\} $, where $M\gg k$.

In \cite{Afek:1999:FWR:301308.301338,Attiya:1998:AWA:277697.277749}
wait-free algorithms have been shown that use registers only to rename
$k$ processes from $\left\{ 0,\dots,M-1\right\} $ to $k$ unique
names in the range $\left\{ 0,\dots,2k-2\right\} $. So we shall relax
our assumption, and assume now we have at most $k$ participating
processes, whose names are in $\left\{ 0,\dots,2k-2\right\} $. Let
us consider the set of functions $\left\{ 0,\dots,2k-2\right\} \to\left\{ 0,\dots,k-1\right\} $,
call it $\mathcal{F}$. So $\left|\mathcal{F}\right|=\left(2k-1\right)^{k}$
is finite, and we can fix an arbitrary ordering of $\mathcal{F}=\left\{ f_{1},\dots,f_{\left(2k-1\right)^{k}}\right\} $.

The $\left(k-1\right)$-set consensus algorithm for $k$ processes
is described in algorithm \ref{alg:-Set-consensus-for}. It uses $\left(2k-1\right)^{k}$
$\text{WRN}_{k}$ objects, $O_{1},\dots,O_{\left(2k-1\right)^{k}}$.
First, the process name is renamed to be $j\in\left\{ 0,\dots,2k-2\right\} $.
Then, for each $\ell\in\left\{ 1,\dots,\left(2k-1\right)^{k}\right\} $
(in this exact order for all processes), the process invokes \texttt{WRN}
operation of $O_{\ell}$ with the index $f_{\ell}\left(j\right)$,
and the proposed value $v_{j}$. If the result of a \texttt{WRN} operation
returns a value different than $\bot$, the process immediately decides
on this returned value, and does not continue to the next iterations.
If the process received $\bot$ from all the \texttt{WRN} operations
on $O_{1},\dots,O_{\left(2k-1\right)^{k}}$, it decides its proposed
value.

\begin{algorithm}
\begin{algorithmic}[1]
\Function{Propose}{$v$}
    \Comment {For process whose name is in $\left\{ 0, \dots, M -1 \right\}$}
  \State $j \gets \Call{Rename}{}$
    \Comment {$j \in \left\{ 0, \dots, 2k - 2 \right\}$}
  \For {$\ell = 1, \dots, \left( 2k - 1 \right)^k$}
    \State $i \gets f_\ell \left( j \right)$
      \Comment {$i \in \left\{ 0, \dots, k - 1 \right\}$ is a local variable.}
    \State $t \gets O_\ell . \Call{WRN}{i, v}$
      \Comment {$t$ is a local variable.}
    \If {$t \ne \bot$}
      \Return {$t$}
    \EndIf
  \EndFor
  \State \Return{$v$}
    \Comment {Reaching here means $t$ was $\bot$ in all iterations}
\EndFunction
\end{algorithmic}

\caption{\label{alg:-Set-consensus-for}$\left(k-1\right)$-Set consensus for
unnamed processes using $\text{WRN}_{k}$ objects.}
\end{algorithm}

The full proof is left for the full paper, below is a sketch of proof.
The first process to perform \texttt{WRN} in each iteration continues
to the next one, and hence the first process to perform \texttt{WRN}
in the last iteration decides on its own value.

We claim that at most $k-1$ different values are decided by $k$
processes that perform \texttt{Propose}. If at most $k-1$ processes
returned, we are done. Assume that all $k$ processes have returned.
Let $\ell$ be the first (smallest index) iteration in which a process
returned, and let $P$ be the last process that returned from this
iteration. Clearly, process $P$ did not return its proposed value,
and no other process returns $P$'s proposed value. Such an iteration
exists, because there is a mapping $f_{\ell^{\prime}}$, where $\ell^{\prime}$
is not the last iteration, i.e., $\ell^{\prime}<(2k-1)^{k}$, that
maps the processes exactly onto $\left\{ 0,\dots,k-1\right\} $, and
some process must have returned in $\ell^{\prime}$ or before.

\section{\label{sec:The--Case}The $k\le2$ Case}

$\text{WRN}_{1}$ is simply a $\text{SWAP}$ object, in which every
call to \texttt{WRN} returns the previous stored value, and swaps
it with the new value. \cite{H91} showed that the consensus number
of $\text{SWAP}$ is $2$.

Algorithm \ref{alg:Solving-consensus-for} solves the consensus task
for two processes using a $\text{WRN}_{2}$ object. The process $P_{i}$
($i\in\left\{ 0,1\right\} $) invokes the \texttt{WRN} operation of
the $\text{WRN}_{2}$ object with the index $i$ and its proposed
value. If the operation returns $\bot$, then $P_{i}$ decides its
proposed value. Otherwise, it decides the returned value.

\begin{algorithm}
\begin{algorithmic}[1]
\Function{Propose}{$v_i$}
    \Comment {For process $P_i$, $i \in \left\{ 0, 1 \right\}$}
  \State $t \gets \Call{WRN}{i, v_i}$
    \Comment {$t$ is a local variable.}
  \If{$t \ne \bot$}
    \Return{$t$}
  \Else{}
    \Return{$v_i$}
  \EndIf
\EndFunction
\end{algorithmic}

\caption{\label{alg:Solving-consensus-for}Solving consensus for two processes
using a $\text{WRN}_{2}$ object.}
\end{algorithm}

It is easy to see that the invocation of \texttt{WRN} by the first
process returns $\bot$, while the second one returns the proposal
of the first process. Hence, the first process to perform \texttt{WRN}
``wins'', and the second one ``loses'', and the agreement criterion
is achieved. It is also clear that validity is preserved; since a
process returns only its proposed value, or the proposal of the other
process.

\bibliographystyle{plain}
\bibliography{wrn}

\end{document}